\DeclarePairedDelimiter\floor{\lfloor}{\rfloor}
\newtheorem{theorem}{Theorem}
\numberwithin{theorem}{subsection}
\newtheorem{definition}[theorem]{Definition}
\newcommand\underparen[1]{\@ifnextchar_{\uphelp{\uparen{#1}}}{\uparen{#1}}}
\def\uphelp#1_#2{\ensurestackMath{\stackunder[1pt]{#1}{\scriptstyle #2}}}
\newcommand\uparen[1]{\setbox0=\hbox{$#1$}\ensurestackMath{%
  \stackunder[0pt]{#1}{\rotatebox{90}{$\left(%
  \rule[\dimexpr-.5\wd0+\dp\strutbox-1.3pt]{0pt}{\wd0}\right.$}}%
}}
\newcommand{\keepcomment}{1} % 1 - Keep comments, 0 - Hide comments
    \newcommand{\stkout}[1]{\ifmmode\text{\sout{\ensuremath{#1}}}\else\sout{#1}\fi}
\title{
 Multiple Resource Allocation in Multi-Tenant Edge Computing via Sub-modular Optimization
}
\author{
    \IEEEauthorblockN{Ayoub Ben-Ameur, Andrea Araldo, Tijani Chahed}
    \IEEEauthorblockA{SAMOVAR, Telecom SudParis, Institut Polytechnique de Paris, 91120 Palaiseau, France\\
    \{first\_name\}.\{last\_name\}@telecom-sudparis.eu}
}
\newcolumntype{C}[1]{>{\centering\arraybackslash}m{#1}}
\pgfplotsset{width=8cm,compat=1.9}
\begin{document}

\maketitle

%%%%%%%%%%%%%%%%%%%%%%%%%%%%%%%%%%%%%%%%%%%%%%
%%%%%%%%%%%%%%%%%%%%%%% ABSTRACT %%%%%%%%%%%%%
%%%%%%%%%%%%%%%%%%%%%%%%%%%%%%%%%%%%%%%%%%%%%%
\begin{abstract}
Edge Computing (EC) allows users to access computing resources at the network frontier, which paves the way for deploying delay-sensitive applications such as Mobile Augmented Reality (MAR). Under the EC paradigm, MAR users connect to the EC server, open sessions and send continuously frames to be processed. The EC server sends back virtual information to enhance the human perception of the world by merging it with the real environment. Resource allocation arises as a critical challenge when several MAR Service Providers (SPs) compete for limited resources at the edge of the network.
In this paper, we consider EC in a multi-tenant environment where the resource owner, i.e., the Network Operator (NO), virtualizes the resources and lets SPs run their services using the allocated slice of resources.
Indeed, for MAR applications, we focus on two specific resources: CPU and RAM, deployed in some edge node, e.g., a central office. We study the decision of the NO about how to partition these resources among several SPs.
We model the arrival and service dynamics of users belonging to different SPs using Erlang queuing model and show that under perfect information, the interaction between the NO and SPs can be formulated as a sub-modular maximization problem under multiple Knapsack constraints. To solve the problem, we use an approximation algorithm, guaranteeing a bounded gap with respect to the optimal theoretical solution. Our numerical results show that the proposed algorithm outperforms baseline proportional allocation in terms of the number of sessions accommodated at the edge for each SP.
\end{abstract}

\begin{IEEEkeywords}
Resource allocation, multi-tenant edge computing, mobile augmented reality, multi-dimensional knapsack problem, queuing model.
\end{IEEEkeywords}

%%%%%%%%%%%%%% INTRODUCTION %%%%%%%%%%%%%%%%%%%
\section{Introduction}
Mobile Augmented Reality (MAR) has become one of the most emerging applications, accompanied by the development of mobile devices and wireless communication. In MAR, the human perception of the world can be enhanced by merging virtual information (generated from object detection, classification, or tracking) with the real environment via mobile devices~\cite{Siriwardhana2021}. However, it is difficult for a mobile device to offer the abundant computation and energy required by MAR applications.

%Big Tech players started already to develop their own AR devices since 2013 when Google announced its open beta of Google Glass~\cite{googleglass} followed by the HoloLens~\cite{hololens} and HoloLens 2~\cite{hololens2} from Microsoft in 2016 and 2019, respectively. Meta Group (formerly Facebook) joined the competition with Virtual Reality (VR) headsets: the Oculus Quest~\cite{quest} in 2019 and Oculus Quest 2~\cite{quest2} in 2020. 

While the production of AR/VR dedicated hardware seems very effective to run AR/VR applications properly, it is costly in the sense that only big players can afford producing their own devices. Hence, multi-tenant EC is particularly interesting for all the other players, as it is probably the only way for small or medium AR Service Providers (SPs) to run their applications at the edge of the network.
The development of EC and 5G has eliminated the obstacle to deploying the MAR service. In the concept of EC~\cite{mao2017survey}, computing and storage resources are deployed at the edge of the access network. %Moreover, supported by the ultra-low latency of 5G technology [3], mobile devices can offload complex computation tasks to the EC network without suffering extra delay.%
Several MAR clients on mobile devices can send MAR requests that contain original data captured by sensors and cameras to the Edge Computing (EC) server. Furthermore, dedicated computing hardware (e.g., Graphics Processing Unit (GPU) and Central Processing Unit (CPU)) and software (e.g., computer vision-based algorithms) process these data and then return the results, such as object classification or space coordinate information, to the mobile devices.

The use of the EC for MAR has attracted extensive attention from the research community and industry recently~\cite{erol2018caching, ayoub2021}, which mainly focus on architecture design and deployment. However, scheduling the MAR requests received from several competing MAR clients on one EC server is critical and challenging. We address in this work the issue of resource allocation to competing, heterogeneous SPs in the case of multiple, limited resources at the Edge. We first model the arrivals and service dynamics of the flows using Erlang queuing model. We then formulate the resource allocation problem to each of the SPs using sub-modular maximization under Knapsack constraints. We next propose an implementation of the so-called streaming algorithm to solve the allocation problem, and obtain a $(\frac{1}{1+2d}-\epsilon)$-approximate optimal value, where $d$ is the number of resource types and $\epsilon$ is a controllable error term. We eventually provide numerical results to show that the resulting system performance significantly outperforms baseline resource allocation policies.

The remainder of this paper is organized as follows. In Section~\ref{sec:related-work} we discuss most relevant work related to ours. We introduce in Section~\ref{sec:system-model} our system model. We formulate the sub-modular maximization problem under Knapsack constraints in Section~\ref{sec:sub-modular-max-knapsack} and describe the proposed algorithm to solve it. In Section~\ref{sec:results}, we show our simulation results. We draw conclusions in Section~\ref{sec:concl}.

%%%%%%%%%%%%%% RELATED WORK %%%%%%%%%%%%%%%%%%%
\section{Related Work}
\label{sec:related-work}
%\subsection{Edge Computing for Mobile Augmented Reality}
Recently, much research effort has been made to develop MAR applications under the EC paradigm. In addition to studies on efficient EC architecture design for MAR \cite{Ren2019}, \cite{fernandez2018fog}, in preliminary studies, researchers concentrated on the resource allocation problem in the MAR service \cite{Wenliang2018}, \cite{jia2018delay}. Some researchers began to notice the trade-off between processing latency and accuracy. They aimed to develop acceleration mechanisms to reduce processing latency~\cite{Lane2016} or characterize the relation between computational complexity and the image size~\cite{Yinghui2020}. Based on these studies, the adaptation of the client configuration (image size and frame rate), and the resource allocation scheme were jointly considered in a centralized manner~\cite{Qiang2018}, \cite{Qiang2018-b}. However, in both studies, the researchers ignored the characteristics of dedicated computing devices (i.e., using batch processing to improve the GPU utility) for MAR tasks. Moreover, their solutions centrally controlled each client configuration, which is challenging to apply to a realistic MAR system.

%\subsection{Resource Allocation in Edge Networks}
In~\cite{josilodan2022}, the authors consider an edge computing system under network slicing in which the wireless devices generate latency sensitive computational tasks. 
The allocation of wireless and computing resources to a set of autonomous wireless
devices in an edge computing system is considered in~\cite{slanda2019}.
A main common assumption of the papers above is that user devices submit tasks to the NO. Contention in these works is modeled among user devices.
However, we consider that these models are not appropriate for EC in our vision, since all traffic between devices and service providers is encrypted to maintain confidentiality and the NO does not have control over it. Therefore the contention for resources is, in our vision, between SPs and not between tasks submitted by users. In our assumption, the NO can only decide how to allocate resources among SPs and then users device interact directly with SPs, outside the control of the NO.
In~\cite{danG2022}, authors consider the interplay between latency constrained applications and function-level resource management in EC. A game theoretic model of the interaction between rate adaptive applications and a load balancing operator is developed under a function-oriented pay-as-you-go pricing model.
In our approach, we assume that the NO does not require any payment from the SPs. The NO aims to maximize his own utility by allocating resources to SPs at the edge. In our vision, an important part of MAR providers cannot afford the payment for resources at the edge.

%%%%%%%%%%%%%% SYSTEM MODEL %%%%%%%%%%%%%%%%%%%
\section{System Model and Optimization Problem}
\label{sec:system-model}

\begin{comment}
    \begin{table*}[t]\centering
  \begin{tabular}{c||c}
   \hline
    Parameter & Definition \\ \hline
    $\mathcal{R}$ & Set of resources\\
    $P$ & Number of SPs\\
    $T$ & Number of time slots\\
    $K^r$ & Total capacity of resource $r$ on the edge node \\ 
    $\lambda_p$ & Users arrival rate for SP $p$ \\
    $\mu_p$ & Users service rate for SP $p$ \\
    $z^r_p$ & Amount of resource $r$ required by one user of SP $p$\\
    $B_p$ & Blocking probability of SP $p$\\
    $N_p$ & Number of users of SP $p$ if it has all the resources on the edge node\\
    $U_E$ & Utility perceived by a user who establishes a session directly in the edge node\\
    $U_C$ & Utility perceived by a user who establishes a session directly in remote cloud\\
    $\mathcal{V}_p$ & Set of users of SP $p$ if it has all the resources on the edge node\\
    \hline
    \hline
    Decision Variable & Definition \\ \hline
    $\vec{\boldsymbol{\theta}}$ & Allocation vector\\
    $\theta^r_p$ & Amount of resource $r$ given to SP $p$ on the edge node\\
    $n_p$ &  Maximum number of users of SP $p$ served at the edge given resources $(\theta_p^r)_{r\in\mathcal{R}}$\\
    $\mathcal{S}_p$ & Set of users of SP $p$ served at the edge given resources $(\theta_p^r)_{r\in\mathcal{R}}$\\
    \hline
  \end{tabular}
  \caption{Summary of used notations}
  \label{tab:notation-table}
\end{table*}
\end{comment}

We consider a setting with one NO, owning a set of resources $\mathcal{R}$ and willing to share them between $P$ different SPs. 
Each SP can then use its assigned share as if it had a dedicated hardware deployed in the edge.

%Table~\ref{tab:notation-table} summarizes the frequently used notations in the paper.

\subsection{Request Pattern}
\label{sec:Request-patterm}
MAR users of SP $p$ arrive to the EC server following a Poisson process with rate $\lambda_p$ expressed in $users/s$. Once a user of any SP $p$ is connected to the edge server, a session is created. This session is valid for a period of time denoted $T_p$ during which the user can perform a sequence of interactions within that MAR application. A single session can contain multiple activities all of which are stored in the session temporarily while the user is connected. Each SP runs in the edge a virtual server, e.g., a Kubernetes POD~\cite{k8s-pods}. A MAR user establishes a session with the virtual server of the respective SP. Within that session, it sends a stream of image processing requests. When users point their MAR device toward an object, raw video from the MAR device cameras are fetched and clipsed into frames with specific image format, such as JPEG and PNG and sent to the edge server~\cite{Ren2019}. The video frames are delivered to the AR tracker to determine the user’s position with respect to the physical surroundings. Given the tracking results, virtual coordinate of the environment can be established by the mapper. Then, the internal objects in video frames are identified by the object recognizer with robust features. The MAR device finally downloads information about the object from the edge server. The AR information is presented in a 3-D “experience” superimposed on the object. What users see, then, is part real and part virtual. Since MAR needs high data rates, ultra-low latency and the possible use of lightweight devices, performing processing at the edge of 5G mobile networks can help guarantee the requirements of MAR applications (Section III-F of~\cite{Siriwardhana2021}).

We assume that a session of a single user of SP $p$ requires a certain amount of resource $r$ denoted $z_p^r$. If the SP does not have at the edge such amount of resources available, the user will establish a session with the cloud, suffering longer delay. Once a user of SP $p$ is served by the edge, his session will be closed and he leaves the EC system. Please note that users can leave the system when they decide, this does not deny that we can define an average service rate for SP $p$ expressed in $users/s$ denoted by $\mu_p = \frac{1}{T_p}$.

\subsection{Resources Partitioning}
\label{sec:resources-partition}

The NO owns CPU and RAM at the edge of the network, for instance, in a server co-located with a (micro) base station or central offices at the metropolitan scale.
It allocates a total capacity $K^{\text{CPU}}$ of CPU and a total capacity $K^{\text{RAM}}$ of RAM among the $P$ SPs. The allocation is a vector $\vec{\boldsymbol{\theta}}$~=~($\vec{\boldsymbol{\theta}^{\text{CPU}}}$,~$\vec{\boldsymbol{\theta}^{\text{RAM}}}$) where each vector $\vec{\boldsymbol{\theta}^r}$ is the allocation of resource $r$. More precisely, the allocation has a form as follows:
\begin{align}
    \vec{\boldsymbol{\theta}} = (\theta^{\text{CPU}}_1, \dots, \theta^{\text{CPU}}_P, \theta^{\text{RAM}}_1, \dots, \theta^{\text{RAM}}_P)
\end{align}

We define the set of all possible allocations as:
\begin{align}
    \label{eq:T}
    \mathcal{T}\triangleq
    \left\{
    \vec{\boldsymbol{\theta}} | \sum_{p=1}^P \theta^r_p\le K^r, \theta^r_p\in\mathbb{Z}^+, r\in\mathcal{R} 
    \right\}
\end{align}

\subsection{Service Model}
\label{sec:service-model}
We model our system as an Erlang queue~\cite{kleinrock1975theory} which models Poisson arrivals, exponentially distributed service time, and a number of servers equal to the number of places in the system, i.e., users are either directly served at the edge or directed to the cloud. In our case, users of SP $p$ arrive to the edge according to a Poisson distribution with mean arrival rate $\lambda_p$, they remain in the system for an exponentially distributed duration, $T_p$. The number of servers in our case refers to the maximum number of sessions that the edge can accommodate for each SP, as determined next. Each user of SP $p$ has fixed requirements $(z_p^r)_{p=1..P, r \in \mathcal{R}}$ and fixed allocation $((\theta_p^r)_{r\in\mathcal{R}})$ during service. We denote by $n_p(\vec{\pmb{\theta}})$ the maximum number of users that can be served at the edge for a SP $p$ when the resource allocation decided by the NO is $\vec{\pmb{\theta}}=(\theta^r_p)_{r\in \mathcal{R}}$. Each user of each SP $p$ will receive an amount $z^r_p$ of the resource $r$ for their session. Hence the maximum number of sessions $n_p(\vec{\pmb{\theta}})$ each SP $p$ can establish at the edge when the allocation from the NO is $\vec{\pmb{\theta}}$ must satisfy:
\begin{align}
n_p(\vec{\pmb{\theta}})\cdot z_p^r \leq \theta_p^r, p = 1\dots P , r\in\mathcal{R}.
    \label{eq:inequality-np}
\end{align}

Therefore, 
${n_p}(\vec{\pmb{\theta}})$ is:
\begin{align}
    n_p(\vec{\pmb{\theta}}) =  \floor*{\min_{r\in\mathcal{R}} \left(\frac{\theta_p^r}{z_p^r} \right)}, p = 1\dots P
\end{align} 
where $\floor*{.}$ is the floor function giving as output the greatest integer less than or equal to $\left(\frac{\theta_p^r}{z_p^r} \right)$.

Let us denote by $N_p$ the number of users of SP $p$ served at the edge if all the resources are allocated only to this SP $p$. 
\begin{align}
    N_p =  \floor*{\min_{r\in\mathcal{R}} \left(\frac{K^r}{z_p^r}\right)}, p = 1\dots P
\end{align}

\subsection{Utility Model}
\label{sec:utility-Model}
A user of SP $p$ is served directly by the edge if the latter can satisfy the requirements $z_p^{\text{RAM}}$ and $z_p^{\text{CPU}}$. Otherwise, the corresponding session is not accepted (we say that it is ``blocked'', following the terminology from queuing theory) and directed to a remote cloud server. 
Using Erlang (equation (3.45) of~\cite{kleinrock1975theory}), the probability for a user of SP $p$ to be blocked is
\begin{align}
B_p(\vec{\boldsymbol{\theta}}) = \frac{\frac{A^{n_p(\vec{\boldsymbol{\theta}})}_p}{n_p(\vec{\boldsymbol{\theta}})!}}{\sum_{i=0}^{n_p(\vec{\boldsymbol{\theta}})} \frac{A^i_p}{i!}}, p = 1\dots P
\label{eq:blocking-probability}    
\end{align}
where $A_p = \frac{\lambda_p}{\mu_p}$. The probability for a user of SP $p$ to have his/her session established with the edge is thus:
\begin{align}
    \bar B_p(\vec{\boldsymbol{\theta}})=1-B_p(\vec{\boldsymbol{\theta}}).
\end{align}

The utility perceived by a user who establishes a session directly in the edge is $U_E$, while if the session is with the cloud, the utility is $U_C$. Such utilities take into account the impact on the Quality of Experience (QoE) of the delay to process every user request, accounting for a larger delay to reach the cloud. Hence, $U_E > U_C > 0$.
For simplicity, we assume that $U_E$ and $U_C$ are the same for all SPs.  Since $1-B_p$ indicates the fraction of users of SP $p$ establishing sessions with the edge, the expected value of the utility perceived by a user of SP $p$ is, by the theorem of total probability:
\begin{equation}
\begin{aligned}
    \mathbb{E} U_p(\vec{\boldsymbol{\theta}}) 
    & = \mathbb{P}(\text{session established with the edge})\cdot U_E \\
    & + 
    \mathbb{P}(\text{session established with the cloud})\cdot U_C\\
    & = \bar B_p(\vec{\boldsymbol{\theta}}) \cdot U_E + (1-\bar B_p(\vec{\boldsymbol{\theta}})) \cdot U_C\\
    & = (U_E-U_C)\cdot \bar B_p(\vec{\boldsymbol{\theta}}) + U_C
\label{eq:utility-function}
\end{aligned}
\end{equation}

By the theorem of total expectation, the utility perceived by a generic user is
\begin{equation}
\begin{aligned}
\mathbb{E}U(\vec{\boldsymbol{\theta}}) &=
    \sum_{p=1}^p 
    \mathbb{E} U_p(\vec{\boldsymbol{\theta}}) \cdot \mathbb{P}(\text{new user is for SP}p)\\
    & = \sum_{p=1}^p w_p\cdot \mathbb{E}U_p(\vec{\boldsymbol{\theta}})
\label{eq:utility-function-any-user}
\end{aligned}
\end{equation}
\noindent where $w_p = \frac{\lambda_p}{\sum_{p'} \lambda_{p'}}$.

\subsection{Optimization Problem}
\label{sec:Optimization-Problem}

The NO aims to maximize the expected value of the utility perceived by a generic user:
\begin{equation}
\begin{aligned}
    \max_{\vec{\pmb{\theta}}} \quad & \mathbb{E}U(\vec{\boldsymbol{\theta}}) \\
    \textrm{s.t.} \quad & \sum_{p=1}^P \theta_p^r \leq K^r, \forall r\in\mathcal{R}
\label{eq:opt-prob-expected-value0}  
\end{aligned}
\end{equation}

Replacing $\mathbb{E}U_p(\vec{\boldsymbol{\theta}})$ with its value found in~(\ref{eq:utility-function}) and observing that $(U_E-U_C)$ and $U_C$ are positive constants, the optimization problem becomes:
\begin{equation}
\begin{aligned}
    \max_{\vec{\pmb{\theta}}} \quad &
    \sum_{p=1}^P
    w_p \bar B_p(\vec{\boldsymbol{\theta}}) \\
    \textrm{s.t.} \quad & \sum_{p=1}^P \theta_p^r \leq K^r, \forall r\in\mathcal{R}
\label{eq:opt-prob-12}    
\end{aligned}
\end{equation}

Thanks to~\eqref{eq:inequality-np} and~\eqref{eq:blocking-probability}, we can express the problem in terms of $\vec{\mathbf{n}}=(n_1,\dots,n_P)$ instead of $\vec{\pmb{\theta}}$:
\begin{equation}
\begin{aligned}
    \max_{\vec{\pmb{n}}} \quad &
    f(\vec{\pmb{n}})=
    \sum_{p=1}^P
    w_p \bar B_p(\vec{\boldsymbol{n}}) \\
    \textrm{s.t.} \quad & 
    \sum_{p=1}^P n_p \cdot z_p^r \leq K^r, \forall r\in\mathcal{R}
\label{eq:opt-prob-2}    
\end{aligned}
\end{equation}
\begin{align}
\text{where}& \quad
\bar B_p(\vec{\boldsymbol{n}})
\triangleq 
1-
\frac{\frac{A^{n_p}_p}{n_p!}}{\sum_{i=0}^{n_p} \frac{A^i_p}{i!}}, p = 1\dots P
\label{eq:bar-Bp}
\end{align}

Observe that $f(\vec{\mathbf{n}})$ is the probability for a generic user to be served with a session with the edge node. This shows that improving the expected user utility~\eqref{eq:opt-prob-expected-value0} is equivalent to maximizing the probability of establishing a session with the edge~\eqref{eq:opt-prob-2}.

\section{Sub-modular Optimization}
\label{sec:sub-modular-max-knapsack}
To describe our problem~\eqref{eq:opt-prob-2} in terms of sub-modular optimization, we interpret a user session established with the edge node as an item. 
Let $\mathcal{V}_p = \{ 1,2, \dots, N_p \}$ be the set of \emph{candidate sessions} of SP $p$ that could coexist in the edge if all resources were given to this SP $p$. Since in reality resources at the edge are not given to one SP only, we need to choose a subset of sessions $\mathcal{S}_p\subseteq\mathcal{V}_p$ to allocate to each SP $p$. This choice induces a certain probability of establishing a session with the edge:
\begin{align}
\bar B_p(\mathcal{S}_p)=
1- \frac{\frac{A^{|\mathcal{S}_p|}_p}{|\mathcal{S}_p|!}}{\sum_{i=0}^{|\mathcal{S}_p|} \frac{A^i_p}{i!}}
\label{eq:barBp-set}
\end{align}
With slight abuse of notation, in the formula above we use the notation $\bar B_p(\cdot)$ as in~\eqref{eq:bar-Bp}, to emphasize that the two quantities are conceptually the same thing, by setting $n_p=|\mathcal{S}_p|$.
Let $\mathcal{V}\triangleq\bigcup_{p=1}^P \mathcal{V}_p$ the set of all candidate sessions and $\mathcal{S}=\bigcup_{p=1}^P\mathcal{S}_p\subseteq\mathcal{V}$ the set of sessions allocated. Set $\mathcal{S}$ is our decision variable. For each SP $p$, we define a non-negative set function $f_p$, taking as input all possible subsets $\mathcal{S}$ of $\mathcal{V}$, as follows:
\[
    f_p(\mathcal{S}) \triangleq
    w_p \cdot \bar B_p(\mathcal{S}\cap\mathcal{V}_p)
    \in[0,1]
\]

\noindent Function $f_p$ represents the probability, for a user that arrives, to be of SP $p$ and to be served with a session at the edge.
We define $f(\mathcal{S})\triangleq \sum_{p=1}^P f_p(\mathcal{S})$. It indicates, for any arriving user, the probability to be served with a session at the edge.

For any subset $\mathcal{S}$ of $\mathcal{V}$, we denote the characteristic vector of $\mathcal{S}$ by $\boldsymbol{x}_{\mathcal{S}}=(x_{\mathcal{S}_1,1},\dots,x_{\mathcal{S}_1,N_1},\dots,x_{\mathcal{S}_P,1},\dots,x_{\mathcal{S}_P,N_P})^T$, where for any $j~\in~[1, N_p]$ and $p = 1,...,P$:
$$
x_{\mathcal{S}_p,j}=\begin{cases}
			1, & \text{if the } j\text{-th item of }\mathcal{V}_p \text{ is in } \mathcal{S}_p \\
            0, & \text{otherwise}
		 \end{cases}
$$

%Let $f~:~\sum_{p=1}^{P}w_p~(1~-~B_p(\mathcal{S}_p))~\rightarrow~[0,\infty)$ be a non-negative set function on the subset $\mathcal{S} = \cup_{p=1}^P \mathcal{S}_p$, of size $\sum_{p=1}^P n_p$, of $\mathcal{V} = \cup_{p=1}^P \mathcal{V}_p$. 

For $\mathcal{S} \subseteq \mathcal{V}$ and $v \in \mathcal{V}$, the marginal gain in $f$ when adding $v$ to set $\mathcal{S}$ is defined as $\Delta_{f}(v|\mathcal{S})~\triangleq~f(\mathcal{S}~\cup~\{v\})~-~f(\mathcal{S})$.
%, which quantifies the increase in $f(\mathcal{S})$ when $v$ is added into subset $\mathcal{S}$. 

We introduce now the $d$-knapsack constraint where $d~=~|\mathcal{R}|$. Let $\boldsymbol{k} = (K^1, \dots, K^d)^T$ be the resource capacity vector and $\boldsymbol{Z}_p~=~(z^r_{p,j})$ denote a $d \times N_p$ matrix, whose $(r,j)$-th entry $z^r_{p,j} > 0$ is the weight of the $j$-th item of $\mathcal{V}_p$ in terms of resource $r$. Since we have assumed (\S\ref{sec:service-model}) that all users of a SP $p$  require the same amount of each resource, $z^r_{p,j}=z_p^r$ for all the items in $\mathcal{V}_p$.
Therefore, the constraint in~\eqref{eq:opt-prob-2} can be expressed by $\boldsymbol{Z}\cdot\boldsymbol{x}_{\mathcal{S}}\leq\boldsymbol{k}$, where $\boldsymbol{Z} = (\boldsymbol{Z}_1,\dots, \boldsymbol{Z}_P)\in\mathbb{R}^{d\times \sum_p N_p}$ and $\boldsymbol{x}_{\mathcal{S}}~\in~\{0,1\}^{\sum_p N_p\times~1}$. 
Problem~\eqref{eq:opt-prob-2} becomes:
\begin{equation}
\begin{aligned}
    \max_{\mathcal{S}} \quad & f(\mathcal{S}) = \sum_{p=1}^P f_p(\mathcal{S}_p)\\
    \textrm{s.t.} \quad & \boldsymbol{Z}\boldsymbol{X}_{\mathcal{S}} \leq \boldsymbol{k}\\
\label{eq:opt-prob-sub-mod}
\end{aligned}
\end{equation}

Without loss of generality, for $1~\leq~i~\leq~d, 1~\leq~j~\leq~N$, we assume that $z_p^r \leq K^r$. That is, no item has a larger weight than the corresponding knapsack budget, since
otherwise such an item would never be selected into $\mathcal{S}$.

We are now ready to study the properties of formulation~(\ref{eq:opt-prob-sub-mod}).
To do so, we recall two common definitions from set-function theory~\cite{fujishige2005submodular}.

\begin{algorithm}[h]

\caption{Streaming Algorithm for sub-modular maximization problem under Knapsack constraints}
\label{alg:streaming}
\KwData{$d, z_p^r, K^r, \lambda_p, \mu_p$}
\KwResult{$\mathcal{S}^*$}
$m \gets 0$\;
$\mathcal{Q} \gets \{ [1+(1+2d)\epsilon]^l | l\in \mathbb{Z} \} $\;
\For{$v \in \mathcal{Q}$}{$\mathcal{S}_v \gets \emptyset$\;
                \For{$1 \leq i \leq d$}{$m \gets \max\{m, f(\{j\})/z_{i,j} \}$\;}
                $\mathcal{Q} \gets \{ [1+(1+2d)\epsilon]^l | l\in \mathbb{Z},\newline \frac{m}{1+(1+2d)\epsilon}\leq [1+(1+2d)\epsilon]^l \leq 2Km \}$\;
                \For{$1 \leq j \leq n$}{\If{$\exists i\in[1,d], z_{i,j} \geq \frac{K}{2} \textbf{ and } \frac{f(\{j\})}{z_{i,j}} \geq \frac{2v}{K^{(1+2d)}}$}{$\mathcal{S}_v \gets \{j\}$\;
                $break$\;}
                \If{$ \forall i\in[1,d], \sum_{l\in \mathcal{S}\cup \{j\} } z_{i,l} \leq K \textbf{ and } \frac{\Delta_f(j|\mathcal{S})}{z_{i,j}} \geq \frac{2v}{K^{(1+2d)}}$}{$\mathcal{S}_v \gets \mathcal{S}_v \cup \{j\}$\;}
                                    }
                }
$\mathcal{S}^* \gets \arg \max_{\mathcal{S}_v, v\in \mathcal{Q}} f(\mathcal{S}_v)$\;

\end{algorithm}

\begin{definition}
A function $f$ is sub-modular if it satisfies that $\Delta_f(v|\mathcal{B})~\leq~\Delta_f(v|\mathcal{A})$, for any $\mathcal{A} \subseteq \mathcal{B} \subseteq \mathcal{V}$ and $v \in \mathcal{V} \setminus \mathcal{B}$. 
\end{definition}

\begin{definition}
A function $f$ is monotone if for any $\mathcal{S} \subseteq \mathcal{V}$ and $v \in \mathcal{V}$, $\Delta_f(v|\mathcal{S}) \geq 0$. 
\end{definition}

\begin{theorem}
    Function $f$ in~\eqref{eq:opt-prob-sub-mod} is monotone and sub-modular.
\end{theorem}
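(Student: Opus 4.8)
The plan is to reduce the statement to a one-dimensional fact about the Erlang--B formula. The first step is to use that monotonicity and sub-modularity are both preserved under non-negative linear combinations: since $\Delta_f(v\mid\mathcal{S})=\sum_{p=1}^{P} w_p\,\Delta_{f_p}(v\mid\mathcal{S})$ and every weight $w_p=\lambda_p/\sum_{p'}\lambda_{p'}$ is non-negative, it suffices to prove that each $f_p$ is monotone and sub-modular. By~\eqref{eq:barBp-set}, $f_p(\mathcal{S})=w_p\,\bar B_p(\mathcal{S}\cap\mathcal{V}_p)$ depends on $\mathcal{S}$ only through the cardinality $|\mathcal{S}\cap\mathcal{V}_p|$, so --- following the same abuse of notation as in~\eqref{eq:bar-Bp} --- I would regard $\bar B_p$ as a scalar sequence, $f_p(\mathcal{S})=w_p\,\bar B_p\!\left(|\mathcal{S}\cap\mathcal{V}_p|\right)$, with $\bar B_p(m)=1-\frac{A_p^{m}/m!}{\sum_{i=0}^{m}A_p^{i}/i!}$ and $A_p=\lambda_p/\mu_p$.

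Next I would carry out the reduction to this scalar sequence. For $\mathcal{A}\subseteq\mathcal{B}\subseteq\mathcal{V}$ and $v\in\mathcal{V}\setminus\mathcal{B}$, inserting $v$ increases $|\mathcal{S}\cap\mathcal{V}_p|$ by one if $v\in\mathcal{V}_p$ and leaves it unchanged otherwise; hence the marginal gain $\Delta_{f_p}(v\mid\mathcal{A})$ equals $0$ when $v\notin\mathcal{V}_p$ and $w_p\bigl(\bar B_p(a+1)-\bar B_p(a)\bigr)$ when $v\in\mathcal{V}_p$, where $a=|\mathcal{A}\cap\mathcal{V}_p|\le b=|\mathcal{B}\cap\mathcal{V}_p|$. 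Consequently $f_p$ is monotone as soon as the sequence $\bar B_p(\cdot)$ is non-decreasing, and sub-modular as soon as its increments $\bar B_p(m+1)-\bar B_p(m)$ are non-increasing in $m$, i.e.\ as soon as $\bar B_p$ is discretely concave. Writing $B_p(m)\triangleq\frac{A_p^{m}/m!}{\sum_{i=0}^{m}A_p^{i}/i!}$ for the Erlang--B loss probability with $m$ servers, so that $\bar B_p(m)=1-B_p(m)$, the whole theorem amounts to showing that $B_p(m)$ is non-increasing and discretely convex in $m$.

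For monotonicity I would use the classical recursion $B_p(m)=\frac{A_p\,B_p(m-1)}{m+A_p\,B_p(m-1)}$, $B_p(0)=1$, obtained from the definition by multiplying numerator and denominator by $(m-1)!/A_p^{m-1}$; it gives $B_p(m)<B_p(m-1)$ exactly when $m>A_p\bigl(1-B_p(m-1)\bigr)$, and the right-hand side is the carried load of an $(m-1)$-server Erlang loss system, i.e.\ the mean number of busy servers, hence at most $m-1<m$. The main obstacle is the discrete convexity $B_p(m-1)+B_p(m+1)\ge 2B_p(m)$ (equivalently, concavity of $\bar B_p$): this is a well-known but non-elementary property of the Erlang--B formula, which I would either cite (Messerli; Jagers--van Doorn; Harel) or reprove by induction on $m$ through the recursion, strengthening the inductive hypothesis to a joint statement about the two consecutive terms $B_p(m-1)$ and $B_p(m)$ rather than about a single one. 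Having shown $\bar B_p$ non-decreasing and discretely concave, the reduction of the first two paragraphs yields that each $f_p$, and therefore $f=\sum_{p=1}^{P}f_p$, is monotone and sub-modular, which proves the theorem.
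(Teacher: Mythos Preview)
Your argument is correct and is the natural, rigorous route: reduce to each $f_p$ by non-negative linearity, observe that $f_p$ depends on $\mathcal{S}$ only through $|\mathcal{S}\cap\mathcal{V}_p|$, and then deduce monotonicity and sub-modularity from the fact that the Erlang--B blocking probability $B_p(m)$ is non-increasing and discretely convex in the number of servers $m$. Your use of the recursion $B_p(m)=A_pB_p(m-1)/(m+A_pB_p(m-1))$ together with the carried-load bound for monotonicity is clean, and you are right to flag that the convexity of $B_p(\cdot)$ is a known but non-trivial result (Messerli; Jagers--van Doorn; Harel) rather than something one should wave away.

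The paper proceeds differently. For monotonicity it performs essentially the same reduction as you do and then appeals to ``simple calculus'' for $\bar B_p(\mathcal{S}_{p'}\cup\{v\})\ge\bar B_p(\mathcal{S}_{p'})$, which your recursion argument actually establishes. For sub-modularity, however, the paper does \emph{not} reduce to the scalar concavity of $\bar B_p$; instead it writes $\mathcal{B}=\mathcal{A}\cup\mathcal{Q}$ and manipulates $f$ directly via inequalities of the form $f(X\cup Y)\le f(X)+f(Y)$ and $f(X\cup Y)\ge f(X)+f(Y)$ applied at different steps. Those two additivity-type bounds cannot both hold simultaneously for the present $f$ (which is neither sub- nor super-additive in general once two sets touch the same $\mathcal{V}_p$), so the paper's chain of inequalities is not justified as written. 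Your approach avoids this issue entirely by pinning the whole burden on the one genuine analytic fact---convexity of Erlang--B---and citing it; this is both more transparent and more robust than the paper's set-algebraic manipulation.
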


\begin{proof}
Let $\mathcal{S}\subseteq\mathcal{V}$ and $v\in\mathcal{V}$. Suppose in particular that $v~\in~\mathcal{V}_{p'}$.

\begin{equation*}
\begin{aligned}
    \Delta_f(v|\mathcal{S}) = & f(\mathcal{S}\cup\{v\})-f(\mathcal{S})\\
    = & \sum_{p\neq p'} f_p(\mathcal{S}_p) + f_{p'}(\mathcal{S}_{p'}\cup\{v\}) - \sum_{p=1}^P f_p(\mathcal{S}_p)\\
    = & f_{p'}(\mathcal{S}_{p'}\cup\{v\}) - f_{p'}(\mathcal{S}_{p'})\\
    = & w_{p'} \cdot \bar B_{p'}(\mathcal{S}_{p'}\cup\{v\}) - w_{p'} \cdot  \bar B_{p'}(\mathcal{S}_{p'})\\
    \geq & 0,
\end{aligned}
\end{equation*}
\noindent where the last inequality can be obtained by simple calculus from~\eqref{eq:barBp-set}.
This shows that function $f$ is monotone. 

Let us consider sets $\mathcal{A} \subseteq \mathcal{B} \subseteq \mathcal{V}$ and a vector $v \in \mathcal{V} \setminus \mathcal{B}$.

\begin{equation*}
\begin{aligned}
    \Delta_f(v|\mathcal{B}) - \Delta_f(v|\mathcal{A}) = & [f(\mathcal{B}\cup\{v\})-f(\mathcal{B})] - \\
    & [f(\mathcal{A}\cup\{v\})-f(\mathcal{A})] \\
    = & [f(\mathcal{B}\cup\{v\}) - f(\mathcal{A}\cup\{v\})] + \\
    & [f(\mathcal{A}) - f(\mathcal{B})]\\
\end{aligned}
\end{equation*}

Having $\mathcal{A} \subseteq \mathcal{B}$, we can write $\exists \mathcal{Q}\subseteq\mathcal{V}/\mathcal{B}=\mathcal{A}\cup\mathcal{Q}$. Hence:

\begin{equation*}
\centering
\begin{aligned}
    & [f(\mathcal{B}\cup\{v\}) - f(\mathcal{A}\cup\{v\})] + [f(\mathcal{A}) - f(\mathcal{B})] && \\
     & = [f(\mathcal{A}\cup\mathcal{Q}\cup\{v\}) - f(\mathcal{A}\cup\{v\})] + [f(\mathcal{A}) - f(\mathcal{A}\cup\mathcal{Q})] && \\
     & \leq [f(\mathcal{A})+f(\mathcal{Q}\cup\{v\}) - f(\mathcal{A})-f(\{v\})] && \\
     & + [f(\mathcal{A}) - f(\mathcal{A}\cup\mathcal{Q})] &&\\
     & \leq [f(\mathcal{Q}\cup\{v\}) - f(\{v\})] + [f(\mathcal{A}) - f(\mathcal{A}) - f(\mathcal{Q})]&& \\
     & = f(\mathcal{Q}\cup\{v\}) - [f(\{v\}) + f(\mathcal{Q})]\leq 0 &&
\end{aligned}
\end{equation*}

Therefore, the function $f$ is sub-modular. 
\end{proof}

Now that we have proved that our objective function $f$ is monotone and sub-modular, we can use well known results from sub-modular optimization. IN particular, we adopt the algorithms proposed in~\cite{Qilian2016}, which we report in Algorithm~\ref{alg:streaming}. The main idea of the algorithm is for every potential new user for each SP $p$, we compare the increase in $f$ when we add this user to the set of users $\mathcal{S}$. We add the user providing the most increase in $f$. 
The algorithm guarantees the following sub-optimality gap (Theorem 1 of \cite{Qilian2016}).
\begin{theorem}
    Algorithm~\ref{alg:streaming} outputs $\mathcal{S}$ that satisfies $f(\mathcal{S}) \geq (\frac{1}{1+2d}-\epsilon) OPT$ and has $O(\frac{\log(K_{\max})}{\epsilon})$ computational complexity per element, $d$ being the number of resources, $0 < \epsilon < \frac{1}{1+2d}$, $K_{\max}=\max_{1\leq i \leq d}K^i$ and OPT the value of $f$ obtained by the optimal solution.
\label{th:optimality-gap}
\end{theorem}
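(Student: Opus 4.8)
The plan is to follow the analysis of the streaming algorithm of \cite{Qilian2016}, instantiated on our ground set $\mathcal{V}=\bigcup_{p=1}^P\mathcal{V}_p$, with the monotone submodular objective $f$ established above and the $d$ knapsack weights given by the resource requirements $z_p^r$. First I would fix an optimal feasible set $\mathcal{S}^{\mathrm{OPT}}$ with $f(\mathcal{S}^{\mathrm{OPT}})=\mathrm{OPT}$ and split its elements into \emph{large} ones --- those with $z_{i,j}\ge K^i/2$ for some resource $i$ --- and \emph{small} ones. Because a large item consumes more than half of some budget, only $O(d)$ large items can coexist in any feasible set, so their total contribution to $\mathrm{OPT}$ is limited; the first conditional branch of Algorithm~\ref{alg:streaming} is designed precisely to grab, for a suitably guessed $v$, one large item whose density $f(\{j\})/z_{i,j}$ clears the threshold, and a one-line density estimate then shows that in that case $f(\mathcal{S}_v)$ already captures a $\Theta(1/d)$-fraction of the large-item mass.

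Next I would analyze the greedy loop over small items. The algorithm appends $j$ to $\mathcal{S}_v$ only when the feasibility test passes and the scaled marginal density $\Delta_f(j\mid\mathcal{S}_v)/z_{i,j}$ exceeds the density threshold (proportional to $v/\big((1+2d)K_{\max}\big)$) for every resource $i$. I would distinguish two terminal situations. If some resource $i$ ends up (nearly) saturated, telescoping the marginal gains along the greedy insertion order and invoking the per-step density lower bound yields $f(\mathcal{S}_v)=\Omega(v)$ outright. Otherwise no resource is saturated, which forces every small element of $\mathcal{S}^{\mathrm{OPT}}$ that the algorithm skipped to have had, at the moment it was examined, marginal density below threshold; submodularity of $f$ then bounds each such $\Delta_f(j\mid\mathcal{S}_v)$, and summing these bounds over the skipped optimal elements together with the $O(d)$ large items gives $\mathrm{OPT}\le (1+2d)\,f(\mathcal{S}_v)$ up to the guessing slack --- i.e. $f(\mathcal{S}_v)\ge\tfrac{1}{1+2d}\mathrm{OPT}$. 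Here the ``$1$'' in $1+2d$ comes from the submodular marginal term and the ``$2d$'' from the two-sided density slack (over- and under-filling) across the $d$ resources.

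It remains to deal with the discretization of the guess. The set $\mathcal{Q}$ is a geometric grid with ratio $1+(1+2d)\epsilon$ spanning $[\,m/(1+(1+2d)\epsilon),\,2K_{\max}m\,]$, where $m=\max_{i,j} f(\{j\})/z_{i,j}$ is a running lower bound tracked by the algorithm; a standard sandwiching argument shows $\mathrm{OPT}$ lies in this interval, so there is a grid value $v^{\star}\in\mathcal{Q}$ with $v^{\star}\le \mathrm{OPT}\le (1+(1+2d)\epsilon)\,v^{\star}$. Feeding $v^{\star}$ into the bound of the previous paragraph and simplifying, the multiplicative slack $1+(1+2d)\epsilon$ degrades $\tfrac{1}{1+2d}\mathrm{OPT}$ to $\big(\tfrac{1}{1+2d}-\epsilon\big)\mathrm{OPT}$; since the algorithm returns $\arg\max_{v\in\mathcal{Q}}f(\mathcal{S}_v)$, its output is at least as good as $\mathcal{S}_{v^{\star}}$. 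For the complexity, each arriving element is processed once against every $v\in\mathcal{Q}$ with $O(1)$ value-oracle calls, and $|\mathcal{Q}|=O\!\big(\log_{1+(1+2d)\epsilon}K_{\max}\big)=O\!\big(\tfrac{\log K_{\max}}{\epsilon}\big)$ because $\log\!\big(1+(1+2d)\epsilon\big)=\Theta(\epsilon)$ for fixed $d$, which is the stated per-element cost.

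I expect the crux to be the case analysis of the small-item loop: pinning the constant down to exactly $\tfrac{1}{1+2d}$ requires careful bookkeeping of (i) the telescoping of marginal gains along the greedy order, (ii) the submodularity inequality applied at the precise examination time of each skipped optimal element, and (iii) the accounting that at most one large item per resource (up to a boundary case) can be ``charged'', so that the slack terms sum to $2d$ in the denominator. The large-item branch, the sandwiching of $\mathrm{OPT}$ inside $\mathcal{Q}$, and the complexity count are comparatively routine once that step is in place.
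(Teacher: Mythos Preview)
The paper does not actually prove this theorem: it simply invokes Theorem~1 of~\cite{Qilian2016}, after having established in the preceding theorem that $f$ is monotone and sub-modular, which is precisely the hypothesis required to apply that external result. Your proposal goes considerably further than the paper by sketching the internal analysis of the cited streaming algorithm --- the large/small split, the density-threshold greedy loop with its saturated/unsaturated case analysis, the geometric guessing grid, and the size of $\mathcal{Q}$ --- and that sketch is a faithful outline of the standard argument. So your approach is not wrong, but it is strictly more than what the paper does (or needs): in the paper the ``proof'' is a one-line citation, and the only local work is the sub-modularity check you already rely on.
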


Note that the hyper-parameter $\epsilon$ impacts the behavior of the algorithm as well as the quality of the optimality gap. The smaller is $\epsilon$, the larger is our $f(\mathcal{S})$.

\section{Numerical Results}
\label{sec:results}

We now evaluate the performance of Algorithm~\ref{alg:streaming} via a numerical model developed in Python and compare it to the proportional allocation where $\theta_p^r$ is proportional to the arrival rate $\lambda_p$ of users of each SP $p$. We set $\epsilon=0.01$.

\subsection{Setting}
We focus on an edge node co-located with a central offices serving 2 SPs. We set arrival rates $\lambda_1$ and $\lambda_2$ at 20 and 5 $users/s$, respectively and departure rates $\mu_1$ and $\mu_2$ at 1 and 10 $users/s$, respectively.
Motivated by Amazon EC2 instances, such as G4dn~\cite{amazon-ec2}, designed to support machine learning inference for applications like adding metadata to an image, object detection, recommendation systems, automated speech recognition, and language translation, we consider an edge server similar to the G4dn.metal with $K^{\text{RAM}} = 384$ GB of total RAM capacity and a 2nd Generation Intel Xeon Scalable CPU: Cascade Lake P-8259L with total capacity of CPU $K^{\text{CPU}} = 96$ vCPU. Taking in consideration AR applications similar to Pokemon GO~\cite{pokemon-go}, we set RAM and CPU requirements for SP 1 and SP 2 at: $z_1^{\text{RAM}} = 2$ GB, $z_1^{\text{CPU}} = 1$ vCPU, $z_2^{\text{RAM}} = 0.5$ GB and $z_2^{\text{CPU}} = 4$ vCPU, respectively.

\begin{figure}[t]
\begin{subfigure}{.24\textwidth}
  \centering
  \includegraphics[width=1\linewidth]{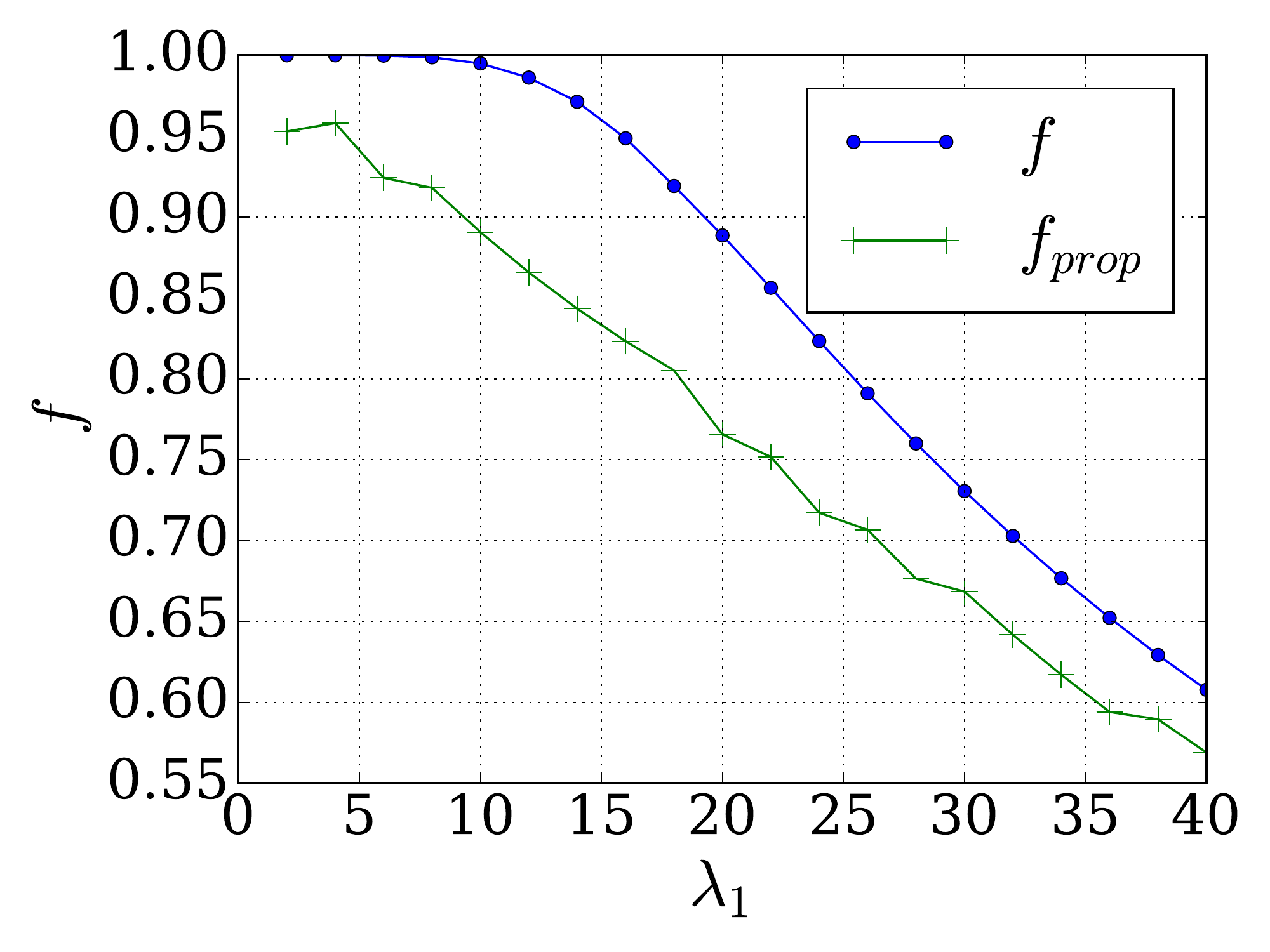}
  \caption{Objective function $f$ vs. $\lambda_1$}
  \label{fig:f}
\end{subfigure}%
\begin{subfigure}{.24\textwidth}
  \centering
  \includegraphics[width=1\linewidth]{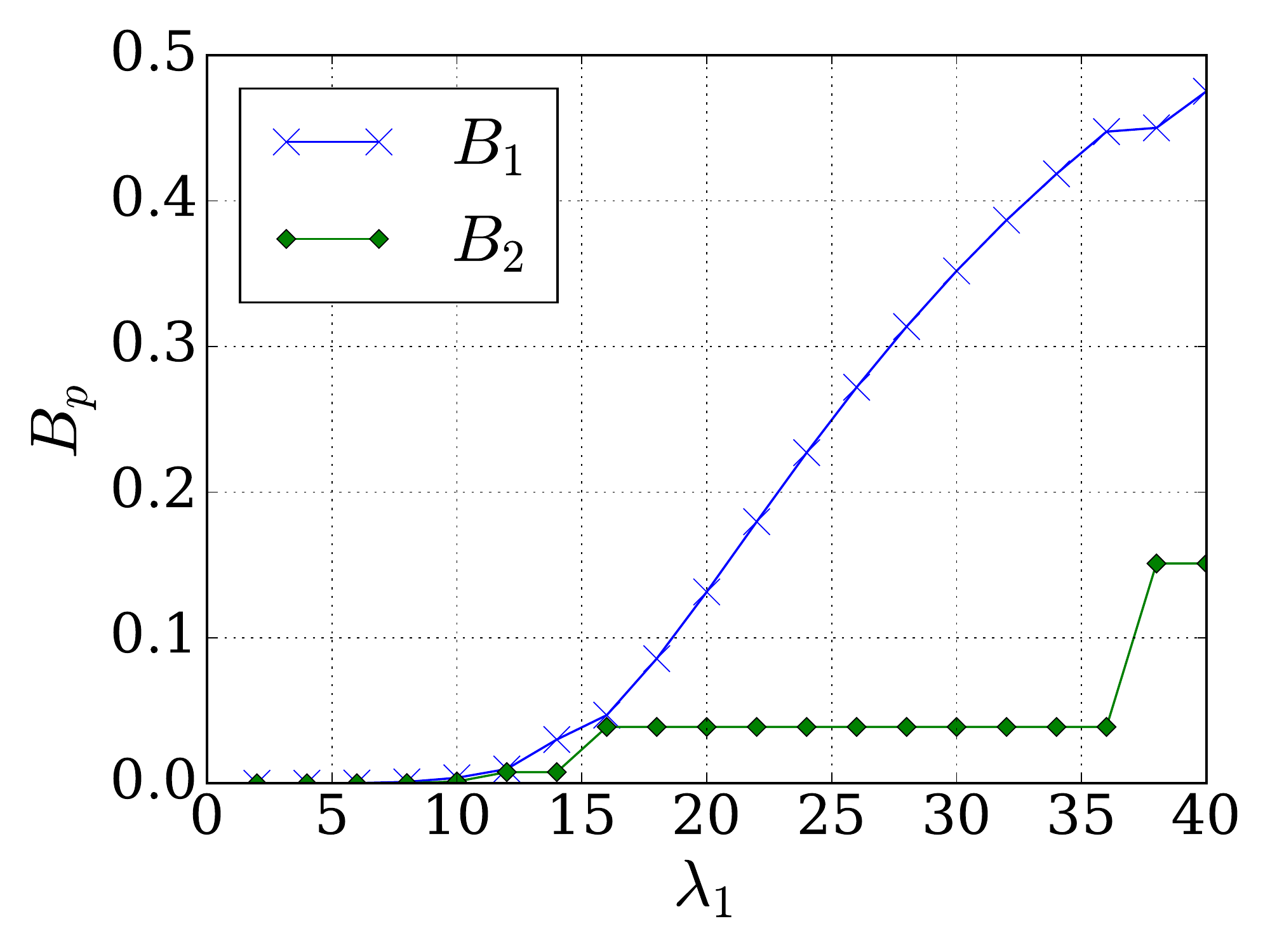}
  \caption{Blocking probability $B_p$ vs. $\lambda_1$}
  \label{fig:B-p}
\end{subfigure}
\caption{Performance of the streaming algorithm  w.r.t $\lambda_1$}
\label{fig:performance}
\end{figure}

\subsection{Results}
We plot in Fig.~\ref{fig:f} our solution obtained with Algorithm~\ref{alg:streaming}: the objective function $f$, which is the probability for a user to establish a session with the edge~\eqref{eq:opt-prob-2} and we compare our solution with the baseline $f_{\text{prop}}$, i.e., the probability of establishing sessions with the edge obtained when allocating resources to SPs proportionally to their users arrival rates. 
In Fig.~\ref{fig:B-p}, we show the variation of the blocking probabilities for each SP when varying $\lambda_1$. The increase in $\lambda_1$ results higher blocking probability for SP 1, which is expected as more users will consume more resources at the edge and less resources are left. Higher $\lambda_1$ will also affect SP 2 but much less significantly. 
As for resource utilization, we plot Fig.~\ref{fig:alloc}. The results show that the CPU is totally utilized by the two SPs (Fig.~\ref{fig:cpu-alloc}), while the RAM is not fully exploited (less than 20\% as shown in Fig.~\ref{fig:ram-alloc}). Despite having more than 80\% of RAM free, we cannot expect better performance since the blocking comes always from the CPU, which is the scarcer resource. Having higher arrival rate, the algorithm does not allow yet SP~1 to have more CPU as this resource is almost 80\% used by SP~2. We can explain this by looking to the values of $z_1^{\text{CPU}}$ and $z_2^{\text{CPU}}$, we can see that SP 2 is CPU-greedy: users of SP 2 consume 4 times more CPU than users of SP 1.

\begin{figure}[t]
\begin{subfigure}{.24\textwidth}
  \centering
  \includegraphics[width=1\linewidth]{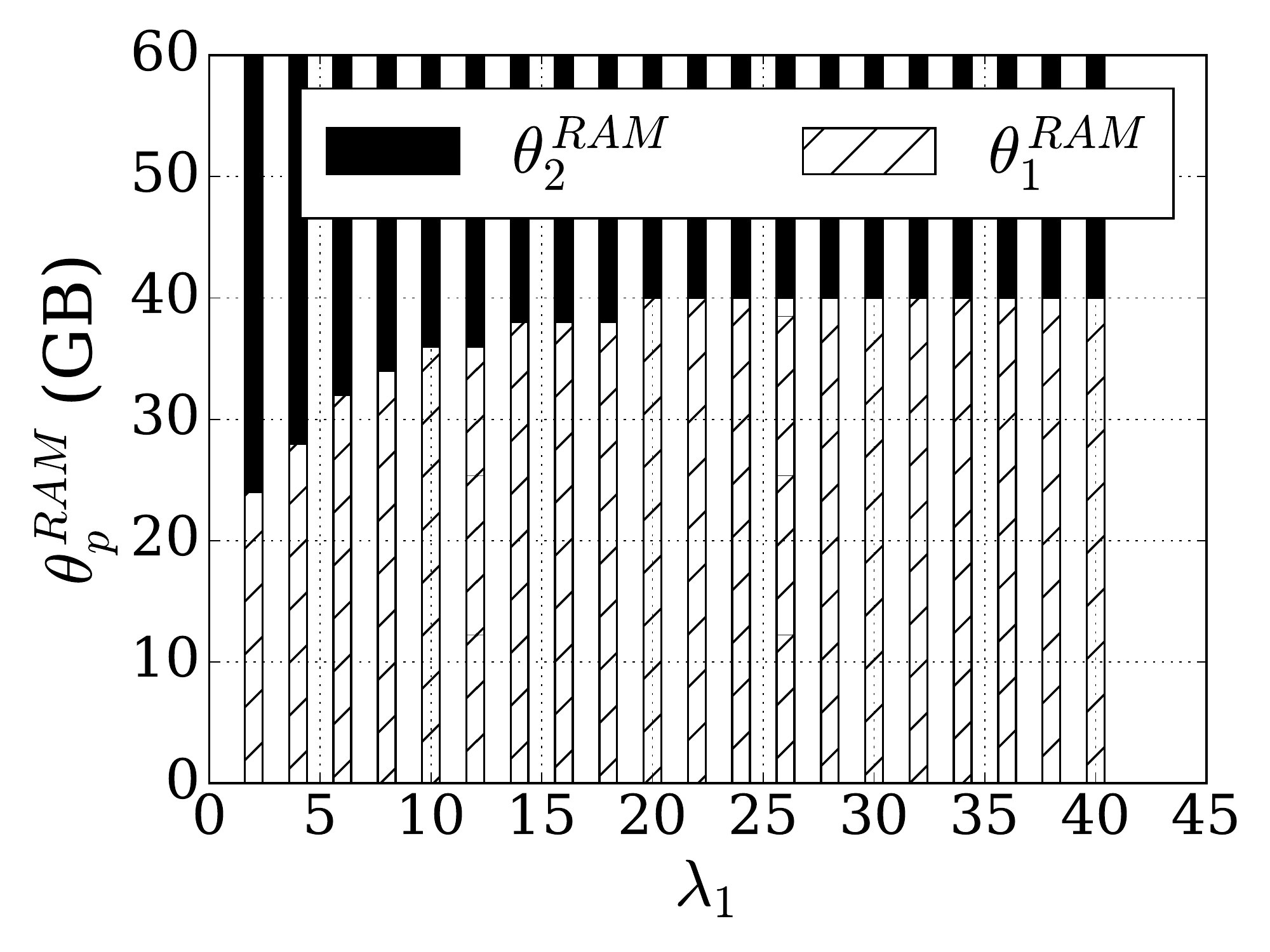}
  \caption{RAM allocation vs. $\lambda_1$}
  \label{fig:ram-alloc}
\end{subfigure}%
\begin{subfigure}{.24\textwidth}
  \centering
  \includegraphics[width=1\linewidth]{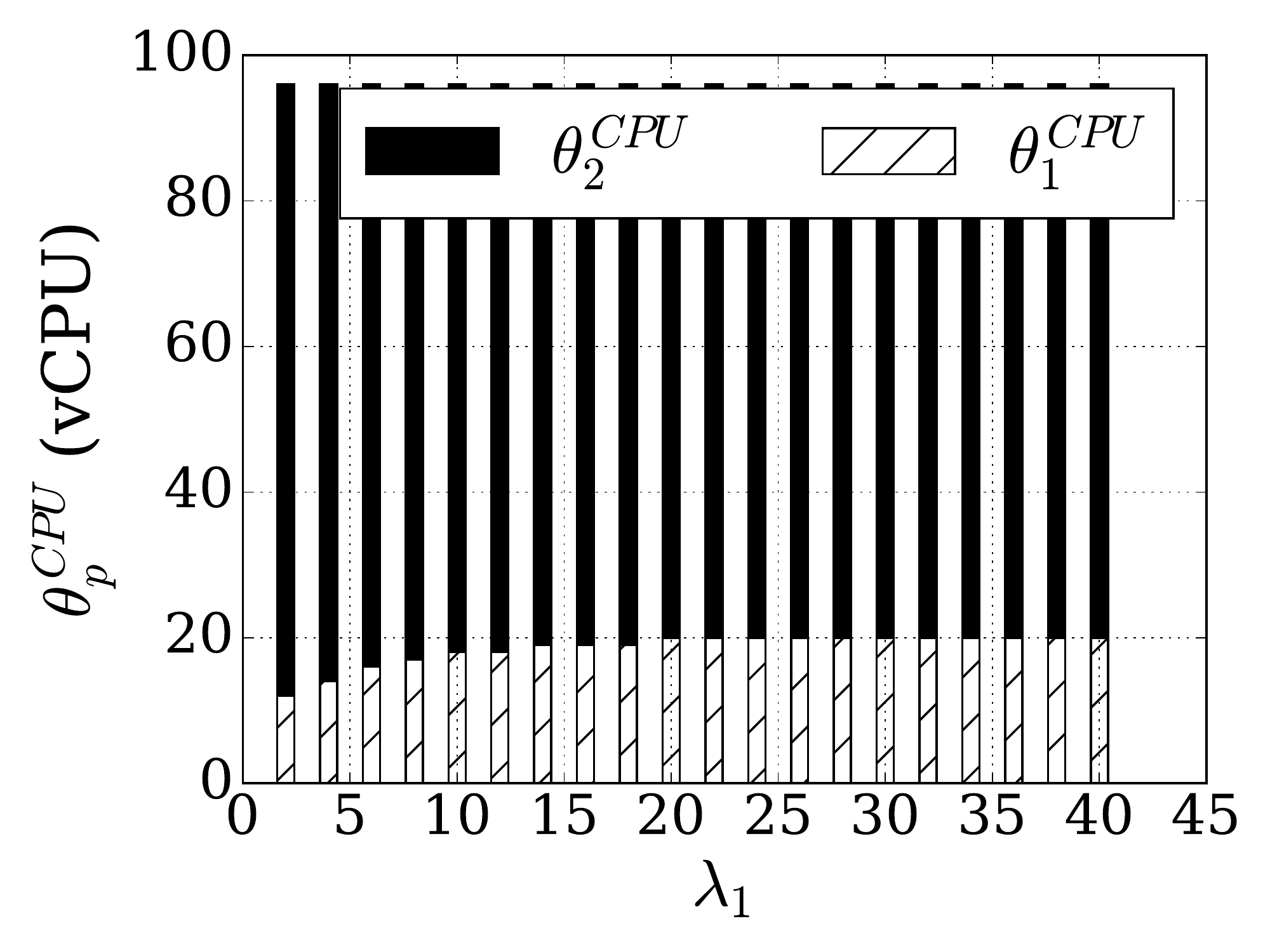}
  \caption{CPU allocation vs. $\lambda_1$}
  \label{fig:cpu-alloc}
\end{subfigure}
\caption{Resource utilization vs. $\lambda_1$}
\label{fig:alloc}
\end{figure}

\begin{figure}[t]
  \centering
  \includegraphics[width=.55\linewidth]{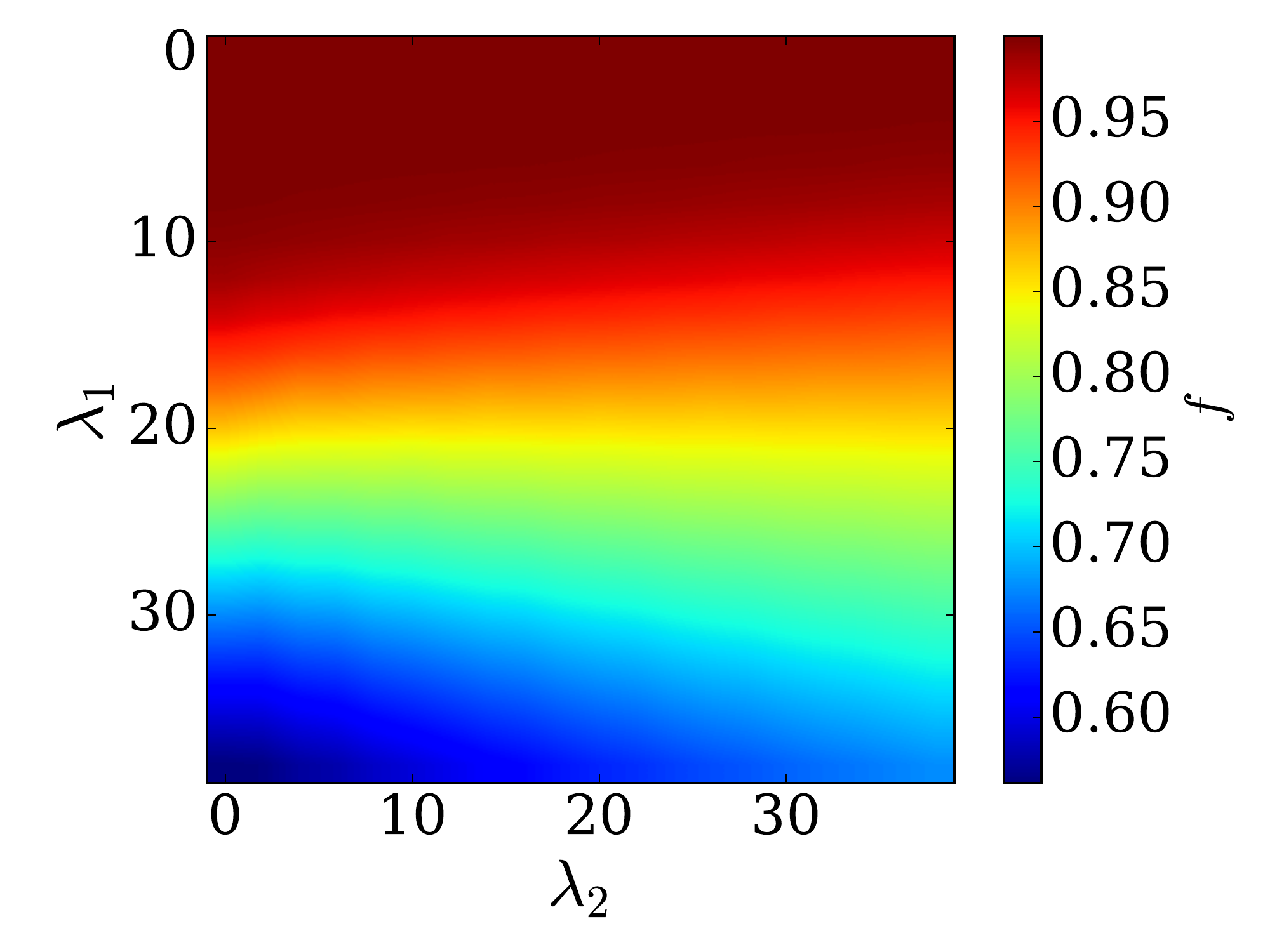}
  \caption{Objective function $f$ w.r.t $\lambda_1$ and $\lambda_2$}
  \label{fig:heat-lam}
\end{figure}

In Fig.\ref{fig:heat-lam}, we plot a heat-map describing the global objective function $f$ with respect to the variations of the two arrival rates. Obviously, the performance of the algorithm under lower arrival rates is better (dark red region $f \geq 0.95$). But what is more interesting in the figure, is that even for high arrival rates for SP 2 ($\lambda_2 \geq 35$), the algorithms keeps performing well up to $\lambda_1 = 20$ (orange region $f\geq 0.85$), no matter the arrival rate $\lambda_2$ of SP~2. The opposite is not the same: for any value of~$\lambda_2$, even small ones, the performance highly depend on $\lambda_1$. We can explain that by the fact that the users of SP 2 consume a lot of CPU (the blocking resource) which means every new admission of SP 1 would degrade the performance of the algorithm.

\begin{figure}[t]
\begin{subfigure}{.24\textwidth}
  \centering
  \includegraphics[width=1\linewidth]{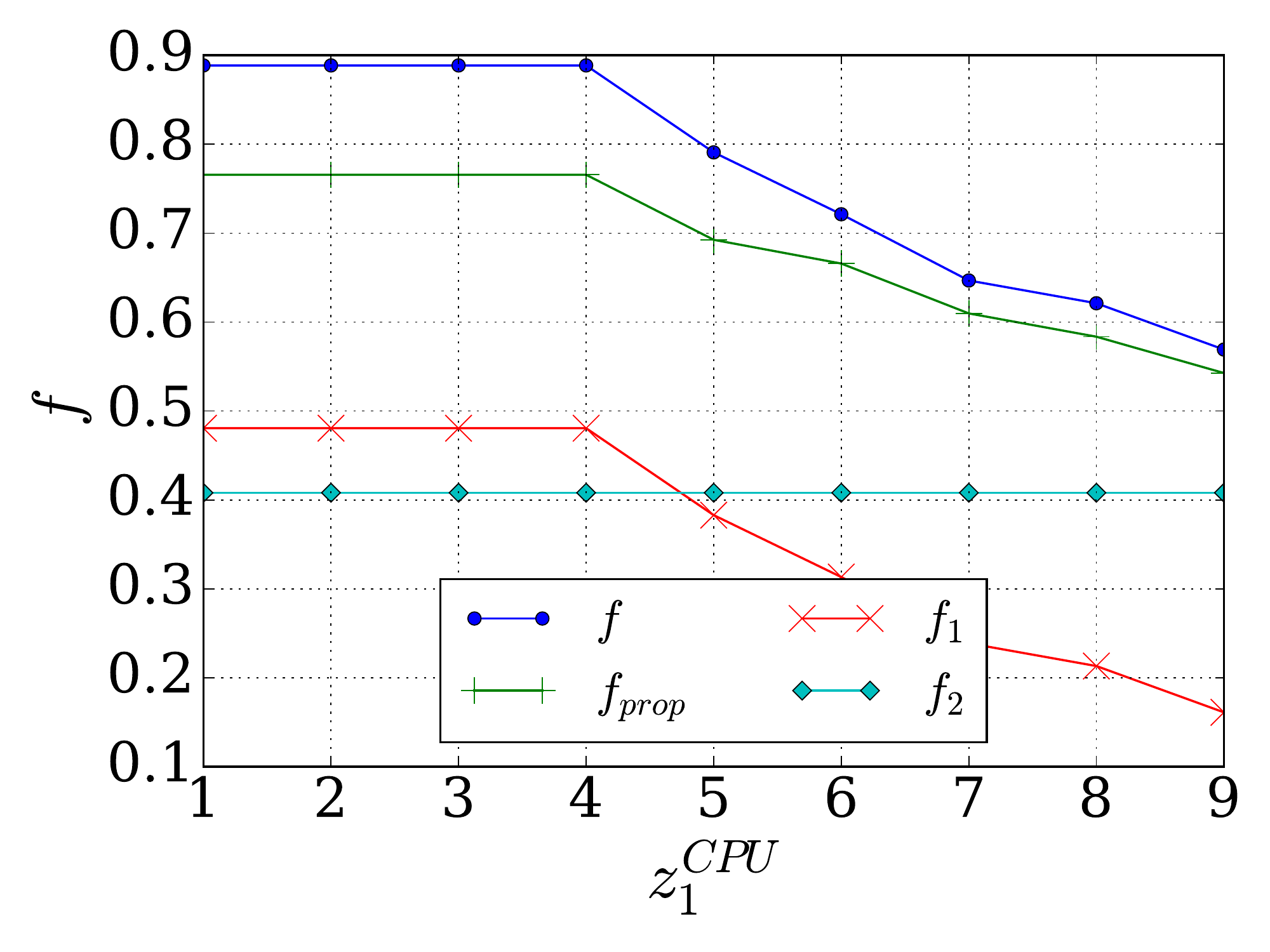}
  \caption{$f$ vs. $z_1^{\text{CPU}}$}
  \label{fig:f-vs-z1cpu}
\end{subfigure}%
\begin{subfigure}{.24\textwidth}
  \centering
  \includegraphics[width=1\linewidth]{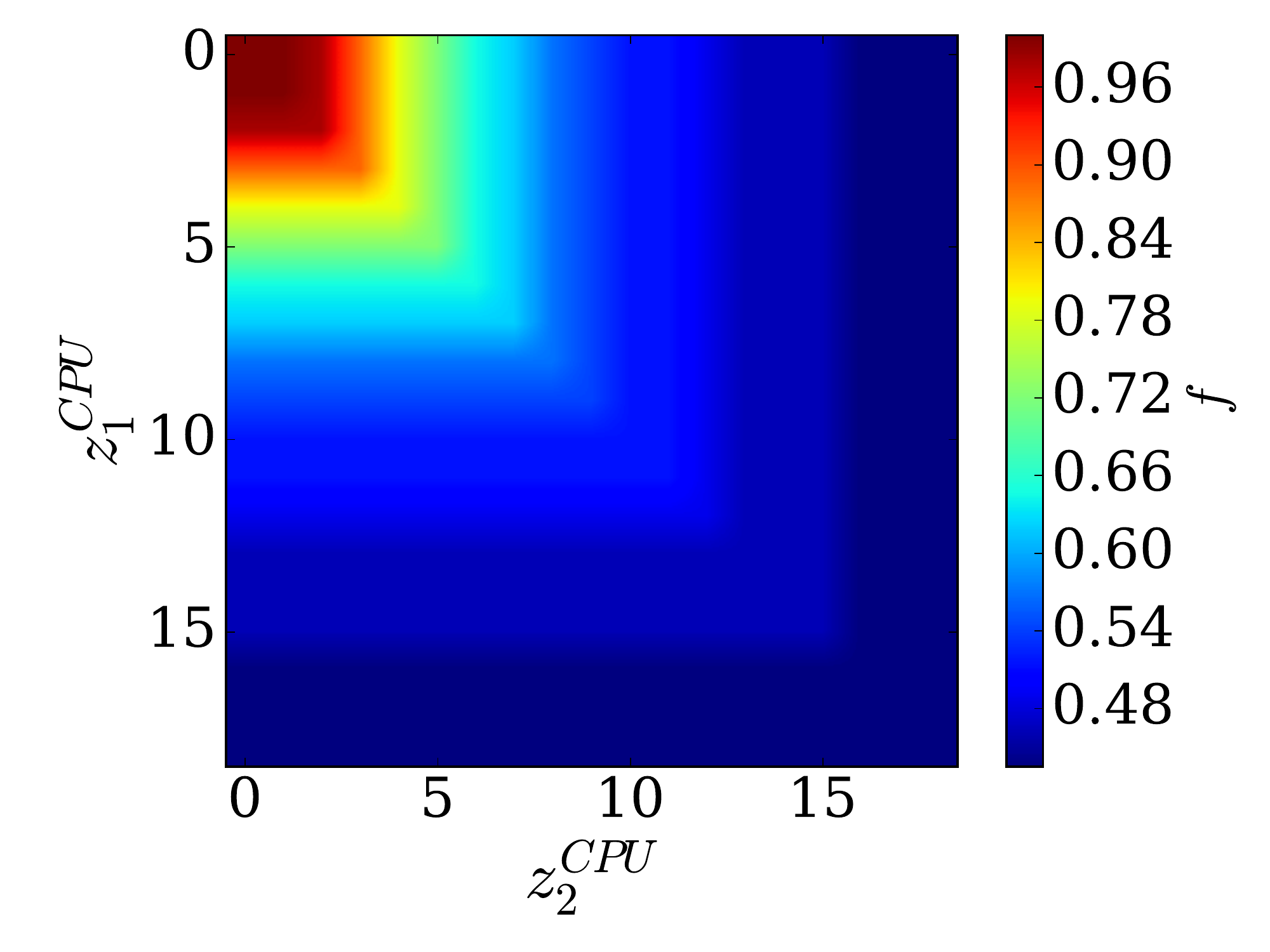}
  \caption{$f$ w.r.t $z_1^{\text{CPU}}$ and $z_2^{\text{CPU}}$}
  \label{fig:heat-zc1-zc2}
\end{subfigure}
\caption{Sensitivity w.r.t $z_p^{\text{CPU}}, p = 1, 2$}
\label{fig:sensitivity}
\end{figure}

Since the CPU is the blocking resource, we evaluate in Fig.~\ref{fig:sensitivity} the sensitivity of the system with respect to the required amount of CPU by each user of the two SPs. First, we plot in Fig.~\ref{fig:f-vs-z1cpu} the objective functions: $f$, $f_1$ and $f_2$ obtained by the algorithm and $f_{\text{prop}}$. The results show that the streaming algorithm outperforms the baseline allocation whatever users of SP 1 require in term of CPU.
In Fig.~\ref{fig:heat-zc1-zc2}, we plot the heat-map describing the global objective function $f$ obtained with the streaming algorithm with respect to the variations of the CPU requirements. The algorithm maintains a satisfying performance (dark red to light green region) up to requirements around 5 vCPU at most and then the performance rapidly decrease with the higher CPU requirements.

%%%%%%%%%%%%%% CONCLUSION AND FUTURE WORK %%%%%%%%%%%%%%%%%%%
\section{Conclusion and Future work}
\label{sec:concl}

We tackled in this paper resource allocation at EC between heterogeneous, MAR-oriented SPs competing over multiple, limited resources. We modeled the users dynamics in terms of an Erlang-type queuing model, we formulated the resource allocation problem as a sub-modular maximization problem subject to multiple knapsack constraints and solve it via an  approximation algorithm with provable optimality gap. Our numerical results quantified the performance of our algorithm in terms of the probability that users get served by the Edge, as opposed to being blocked and re-directed towards the Cloud which entails larger delay and hence lesser QoE. We showed the resulting resources partitioning between the SPs. 
We showed the algorithm outperforms a baseline resource allocation, proportional to users arrival rates. Finally, we included a sensitivity analysis with respect to individual user requirement of a given resource. Our next work perspective would focus on the case where users arrival rates as well as resource requirements are unknown, the NO shall implement learning in order to be able to allocate resources in this case. 

%that the proposed algorithm outperforms baselines allocations
%%
%% The next two lines define the bibliography style to be used, and
%% the bibliography file.

\bibliographystyle{IEEEtran}
\bibliography{main}

\end{document}